\newtheorem{theorem}{Theorem}
\definecolor{myorange}{RGB}{255,165,0}
\definecolor{mygreen}{RGB}{46, 139, 87}
\newcommand{\winner}[1]{\textcolor{blue}{\textbf{#1}}}
\title{Manifold-adaptive dimension estimation revisited}
\author[1,2, *]{Zsigmond Benkő}
\author[1]{Marcell Stippinger}
\author[1]{Roberta Rehus}
\author[1]{Attila Bencze}
\author[4]{Dániel Fabó}
\author[2, 4]{Boglárka Hajnal}
\author[5,6]{Loránd Erőss}
\author[1,3,7]{András Telcs}
\author[1,8]{Zoltán Somogyvári}
\affil[1]{Department of Computational Sciences, Wigner Research Centre for Physics,  H-1121, Hungary}
\affil[2]{János Szentágothai Doctoral School of Neurosciences, Semmelweis University, H-1085, Hungary}
\affil[3]{Department of Computer Science and Information Theory, Faculty of Electrical Engineering and Informatics, Budapest University of Technology and Economics, Budapest, H-1111, Hungary}
\affil[4]{Epilepsy Center, Department of Neurology, National Institute of Clinical Neurosciences, Budapest, H-1145, Hungary}
\affil[5]{Department of Functional Neurosurgery, National Institute of Clinical Neurosciences, Budapest, H-1145, Hungary}
\affil[6]{Faculty of Information Technology and Bionics, Péter Pázmány Catholic University, Budapest, H-1083, Hungary}
\affil[7]{Department of Quantitative Methods, University of Pannonia, Faculty of Business and Economics, H-8200, Veszprém, Hungary}
\affil[8]{Neuromicrosystems ltd., Budapest, H-1113, Hungary}
\affil[*]{benko.zsigmond@wigner.hu}
\begin{document}

\maketitle

\begin{abstract}
   Data dimensionality informs us about data complexity and sets limit on the structure of successful signal processing pipelines.
    In this work we revisit and improve the manifold adaptive Farahmand-Szepesvári-Audibert (FSA) dimension estimator, making it one of the best nearest neighbor-based dimension estimators available.
    We compute the probability density function of local FSA estimates, if the local manifold density is uniform.
    Based on the probability density function, we propose to use the median of local estimates as a basic global measure of intrinsic dimensionality, and we demonstrate the advantages of this asymptotically unbiased estimator over the previously proposed statistics: the mode and the mean.
    Additionally, from the probability density function, we derive the maximum likelihood formula for global intrinsic dimensionality, if i.i.d. holds.
    We tackle edge and finite-sample effects with an exponential correction formula, calibrated on hypercube datasets.
    We compare the performance of the corrected-median-FSA estimator with kNN estimators: maximum likelihood (ML, Levina-Bickel) and two implementations of DANCo (R and matlab).
    We show that corrected-median-FSA estimator beats the ML estimator and it is on equal footing with DANCo for standard synthetic benchmarks according to mean percentage error and error rate metrics.
    With the median-FSA algorithm, we reveal  diverse changes in the neural dynamics while resting state and during epileptic seizures.
    We identify brain areas with lower-dimensional dynamics that are possible causal sources and candidates for being seizure onset zones.
\end{abstract}

\section*{Introduction}

Dimensionality sets profound limits on the stage where data takes place, therefore it is often crucial to know the intrinsic dimension of data to carry out meaningful analysis.
Intrinsic dimension provides direct information about data complexity, as such, it was recognised as a useful measure to describe the dynamics of dynamical systems\citep{Grassberger1983}, to detect anomalies in time series\citep{Houle2018177}, to diagnose patients with various conditions\citep{Dlask2017, Polychronaki2010, Sharma2017, Acharya2013} and to use it simply as plugin parameter for signal processing algorithms. 

Most of the multivariate datasets lie on a lower dimensional manifold embedded in a potentially very high-dimensional embedding space.
This is because the observed variables are far from independent, and this interdependence introduces redundancies resulting in a lower intrinsic dimension (ID) of data compared with the number of observed variables.
To capture this -- possibly non-linear -- interdependence,  nonlinear dimension-estimation techniques can be applied\citep{Sugiyama2013, Romano2016, Benko2018, Krakovska2019}.

To estimate the ID of data various aproaches have been proposed, for a full review of techniques see the work of Campadelli et al.\citep{campadelli2015intrinsic}.
Here we discuss the k-Nearest Neighbor (kNN) ID estimators, with  some recent advancements in the focus.

A usually basic assumption of $k$NN ID estimators is that the fraction of points in a neighborhood is approximately determined by the intrinsic dimensionality ($D$) and distance ($R$) times a -- locally almost constant -- mostly density-dependent factor ($\eta(x, R)$, Eq.\,\ref{eq:knn}).

\begin{equation}\label{eq:knn}
    \frac{k}{n} \approx \eta(x, R) *  R_k^D
\end{equation}
where $k$ is the number of samples in a neighborhood and $n$ is the total number of samples on the manifold.

Assuming a Poisson sampling process on the manifold Levina and Bickel\citep{Levina2005} derived a Maximum Likelihood estimator, which became a popular method and got several updates\citep{Ghahramani2005, Gupta2010}.
These estimators are prone to underestimation of dimensionality because of finite sample effects and overestimations because of the curvature.

To address the challenges posed by curvature and finite sample, new estimators were proposed \citep{Rozza2012, Bassis2015, Ceruti2014, Facco2017}.
To tackle the effect of curvature, a minimal neighborhood size can be taken on normalized neighborhood distances as in the case of $\mathrm{MIND}_{\mathrm{ML}}$\citep{Rozza2012}.
To tackle the underestimation due to finite sample effects, empirical corrections were applied.
A naive empirical correction approach was applied by Camastra and Vinciarelli\citep{Camastra2002}: a perceptron was trained on the estimates computed for  randomly sampled hypercubes to learn a correction function.
Motivated by the correction in the previous work, the IDEA method was created\citep{Rozza2012}; and a more principled approach was carried out, where the full distribution of estimates was compared to the distributions computed on test data sets using the Kullback-Leibner divergence ($\mathrm{MIND}_{\mathrm{KL}}$\citep{Rozza2012}, DANCo\citep{Ceruti2014}).
In the case of DANCo, not just the nearest neighbor distances, but the angles are measured and taken into account in the estimation process resulting in more accurate estimates.

In the recent years, further estimators have been proposed, such as the estimator that uses minimal neighborhood information leveraging the empirical distribution of the ratio of the nearest neighbors to fit intrinsic dimension\citep{Facco2017}, or other approaches based on simplex skewness\citep{Johnsson2015} and normalized distances \citep{Chelly2016, Amsaleg2015, Amsaleg2018, Amsaleg2019}.

In the following section, we revisit the manifold adaptive dimension estimator
 proposed by Farahmand et al.\citep{Farahmand2007} to measure intrinsic dimensionality of datasets.
From Eq. \ref{eq:knn} we can take the logarithm of both sides:

\begin{equation}
    \begin{split}
        \ln \left( \frac{k}{n} \right) &\approx \ln{\eta} + D \ln{R_k}\\
        \ln \left( \frac{2k}{n} \right) &\approx \ln{\eta} + D \ln{R_{2k}}
    \end{split}
\end{equation}
If $\eta$ is slowly varying and $R$ is small, we can take it as a constant.

If we subtract the two equations from each other we get:

\begin{equation}
     \ln{\left( 2 \right)} \approx D \ln{\left( \frac{R_{2k}}{R_k} \right)}
\end{equation}

Thus, to compute the local estimates, we fit a line through the log-distance $k$th and $2k$th nearest neighbor at a given location.

\begin{equation}
    d(x) = \frac{\ln(2)}{\ln \left( R_{2k}/R_{k} \right)}
\end{equation}

To compute a global ID estimate, the authors proposed the mean of local estimates at sample-points, or a vote for the winner global ID value (the mode), if the estimator is used in integer-mode.
They proved that the above global ID estimates are consistent for $k>1$, if $\eta$ is differentiable and the manifold is regular.
They calculated the upper bound for the probability of error for the global estimate, however this bound contains unknown constants\citep{Farahmand2007}.

In this paper we propose an improved FSA estimator, based on the assumption that the density is locally uniform.
We suggest to use the median of local values for a global intrinsic dimension estimate.
We correct the underestimation effect by an exponential formula and test the new algorithm on benchmark datasets.
We apply the proposed estimator to locate epileptic focus on field potential measurements.

\begin{figure}[htb!]
    \centering
    \includegraphics[width=0.9\linewidth]{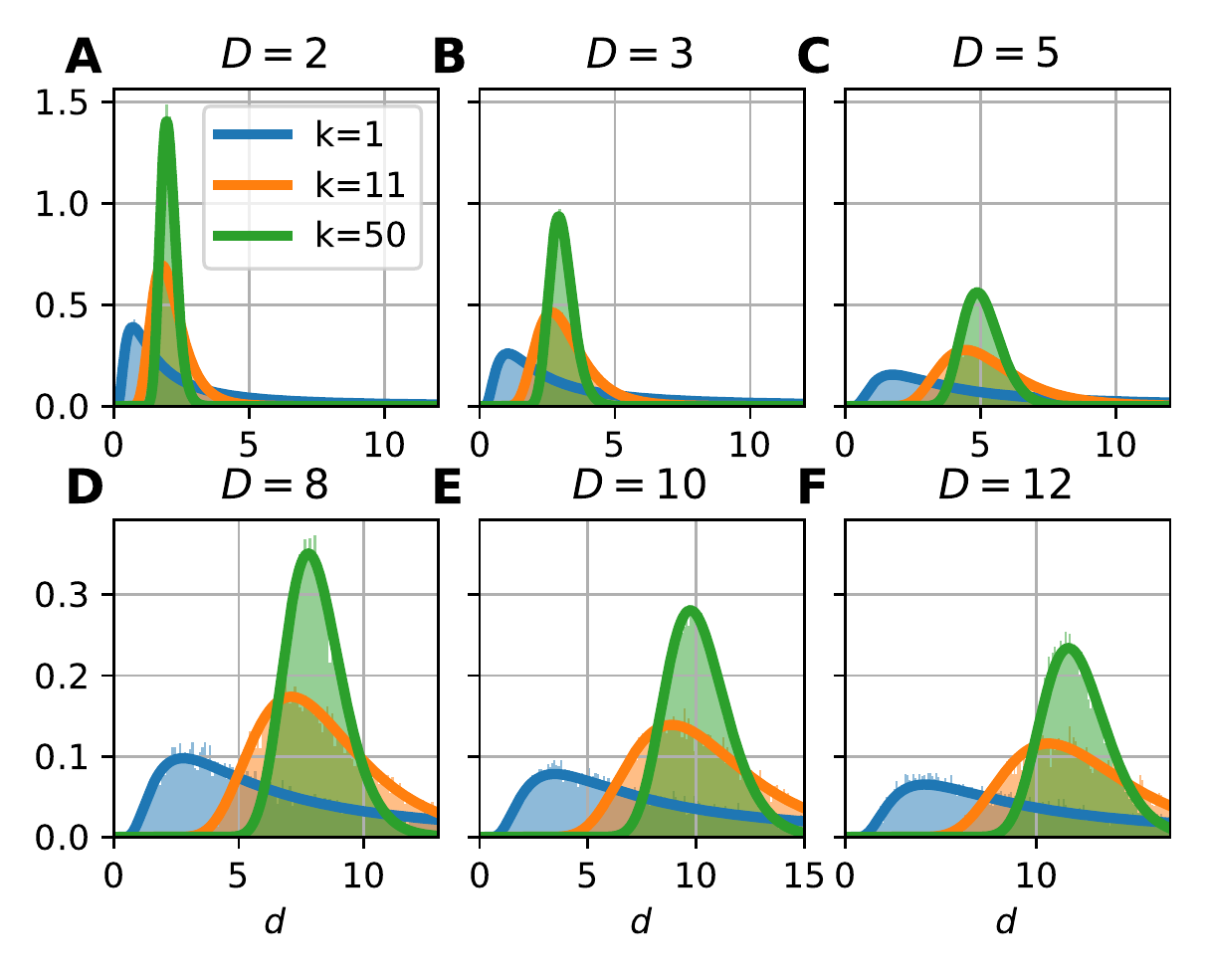}
    \caption{\textbf{Probability density function of the Farahmand-Szepesvári-Audibert estimator ($d$) for various dimensions ($D$) and neighborhood sizes ($k$).} \textbf{A-F} The sublots show that theoretical pdfs (continuous lines) fits to the histograms ($n=10000$) of local estimates calculated on uniformly sampled hypercubes ($D=2, 3, 5, 8, 10, 12$). The three colors denote three presented neighborhood sizes: $k=1$ (blue), $k=11$ (orange) and $k=50$ (green).}
    \label{fig:szepes_pdf}
\end{figure}

\section*{Results}

\subsection*{Manifold adaptive dimension estimator revisited}
\subsubsection*{The probability density of Farahmand-Szepesvári-Audibert estimator}
We compute the probability density function of Farahmand-Szepesvári-Audibert (FSA) intrinsic dimension estimator based on normalized distances.

The normalized distance density of the $k$NN can be computed in the context of a $K$-neighborhood, where the normalized distance of $K-1$ points follows a specific form:

\begin{equation}\label{eq:pdf_r}
    p(r|k, K-1, D) = \frac{D}{B(k, K-k)} r ^ {D k -1} (1 - r^D) ^ {K-k-1}
\end{equation}
where $r$ is the normalized distance of the $k$th neighbor by the distance of $K$th neighbor ($r_k = R_k / R_K$, $k<K$) and $B$ is the Euler-beta function (see SI\,\ref{si:r_pdf} for a derivation).
A maximum likelihood estimator based on Eq.\,\ref{eq:pdf_r} leads to the formula of the classical Levina-Bickel estimator (\citep{Levina2005}). 
For a derivation of this probability density and the maximum likelihood solution see SI\,\ref{si:r_pdf} and SI\,\ref{si:ML} respectively.

We realize that the inverse of normalized distance appears in the formula of FSA estimator, so we can express it as a function of $r$:

\begin{equation}
    d_k = \frac{\log{2}}{\log{\left( R_{2k} / R_{k} \right)} }
     = -\frac{\log{2}}{\log{\left( R_{k} / R_{2k} \right)} } 
 = -\frac{\log{2}}{\log{r_k}}
\end{equation}
Where $r_k = R_k / R_{2k}$.

Thus, we can compute the pdf of the estimated values as plugging in $K=2k$ into Eq.\,\ref{eq:pdf_r} followed by change of variables:

\begin{equation}\label{eq:szepes_pdf}
     q \left( d_k \right) \equiv p \left( r|k, 2k-1, D \right) \left| \frac{\mathrm{d}r}{\mathrm{d}d_k} \right| =  \frac{D \log{(2)}}{B(k, k)}   \frac{2^{-\frac{Dk}{d_k}} \left(1 - 2^{-\frac{D}{d_k} } \right)^{k-1}}{d_k^2}
\end{equation}

\begin{theorem}
    The median of $q(d_k)$ is at $D$.
\end{theorem}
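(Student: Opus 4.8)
The plan is to avoid manipulating the somewhat unwieldy density $q(d_k)$ of Eq.\,\ref{eq:szepes_pdf} directly, and instead to exploit the fact that a median is preserved under monotone reparametrisation. First I would record that $d_k = -\log 2 / \log r_k$ is a strictly increasing bijection from $r_k \in (0,1)$ onto $d_k \in (0,\infty)$, with inverse $r_k = 2^{-1/d_k}$. Hence $\{d_k \le x\} = \{r_k \le 2^{-1/x}\}$, so the assertion ``the median of $q(d_k)$ is $D$'' is equivalent to the statement that the median of the law of $r_k$ equals $2^{-1/D}$, i.e.\ to the identity $\Pr\!\left(r_k \le 2^{-1/D}\right) = \tfrac12$, where $r_k$ carries the Beta-type density $p(r\mid k, 2k-1, D)$ of Eq.\,\ref{eq:pdf_r} with $K = 2k$.

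Next I would evaluate this probability by the substitution $t = r^D$, so that $\mathrm{d}t = D\, r^{D-1}\,\mathrm{d}r$ and
\[
\Pr(r_k \le c) \;=\; \frac{D}{B(k,k)} \int_0^{c} r^{Dk-1}\,(1-r^D)^{k-1}\,\mathrm{d}r
\;=\; \frac{1}{B(k,k)} \int_0^{c^{D}} t^{k-1}\,(1-t)^{k-1}\,\mathrm{d}t \;=\; I_{c^{D}}(k,k),
\]
the regularised incomplete Beta function. Taking $c = 2^{-1/D}$ gives $c^{D} = \tfrac12$, so the quantity of interest is precisely $I_{1/2}(k,k)$.

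Finally, I would invoke the symmetry of the $\mathrm{Beta}(k,k)$ law about $\tfrac12$: the integrand $t^{k-1}(1-t)^{k-1}$ is invariant under $t \mapsto 1-t$, whence $\int_0^{1/2} t^{k-1}(1-t)^{k-1}\,\mathrm{d}t = \int_{1/2}^{1} t^{k-1}(1-t)^{k-1}\,\mathrm{d}t = \tfrac12 B(k,k)$, and therefore $I_{1/2}(k,k) = \tfrac12$. Combining the three steps yields $\Pr(d_k \le D) = \tfrac12$, which is the claim.

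The computation is essentially routine; the only points demanding a little care are the bookkeeping of the change of variables and, above all, checking the direction of monotonicity of $r_k \mapsto d_k$, so that the median of $r_k$ corresponds to the median of $d_k$ through $D \leftrightarrow 2^{-1/D}$ rather than through some mirrored value. I would also note, for completeness, that $q$ is strictly positive on $(0,\infty)$, so its cumulative distribution function is strictly increasing there and the median is unique; hence ``the median is at $D$'' is unambiguous.
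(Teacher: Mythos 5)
Your proof is correct and is essentially the paper's argument: your composite change of variables $t = r_k^D = 2^{-D/d_k}$ is exactly the substitution $a = 2^{-D/d_k}$ used in the paper, reducing everything to the symmetric $\mathrm{Beta}(k,k)$ law with median $\tfrac12$. The only cosmetic difference is that you work at the level of the cumulative distribution of $r_k$ rather than transforming the density $q(d_k)$ directly.
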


\begin{proof}
    We apply the substitution $a=2^{-D / d_k}$ in Eq. \ref{eq:szepes_pdf} (Eq. \ref{eq:beta}):
    
    \begin{align}
        p(a) &= q(d_k) \left| \frac{\mathrm{d}d_k}{\mathrm{d}a} \right| =\\
        &= \frac{D \log{(2)}}{B(k, k)} \frac{a^k (1-a)^{k-1} \log^2{a}}{D^2 \log^2{2}} \frac{D \log{2}}{a \log^2{a}} \\
        &= \frac{1}{B(k, k)} a^{k-1} (1-a)^{k-1} \label{eq:beta}
    \end{align}
    The pdf in Eq.\ref{eq:beta} belongs to a beta distribution.
    The cumulative distribution function of this density is the regularized incomplete Beta function with $k$ as both parameters symmetrically.
    
    \begin{equation}\label{eq:Pa}
        P(a) = I_a(k, k)
    \end{equation} 
    The median of this distribution is at $a=\frac{1}{2}$, thus at $d_k=D$ since:
    \begin{eqnarray}
        a = 2^{-\frac{D}{d_k}} & = & \frac{1}{2}\\
        D & = & d_k \label{eq:szepes_median_general}
    \end{eqnarray}
\end{proof}
This means that the median of the FSA estimator is equal to the intrinsic dimension independent of neighborhood size, if the locally uniform point density assumption holds.
The sample median is a robust statistic, therefore we propose to use the sample median of local estimates as a global dimension estimate. 
We will call this modified method the median Farahmand-Szepesvári-Audibert (mFSA) estimator.

Let's see the form for the smallest possible neighborhood size: $k=1$ (Fig.\,\ref{fig:szepes_pdf}).
The pdf  for the estimator takes a simpler from (Eq. \ref{eq:szepes_pdf_k1}).

\begin{equation}\label{eq:szepes_pdf_k1}
    q(d|k=1, D) = D \log(2) \frac{2^{-\frac{D}{d_1}}}{d_1^2}
\end{equation}

Also, we can calculate the cumulative distribution function analytically (Eq. \ref{eq:szepes_cdf_k1}).

\begin{equation}\label{eq:szepes_cdf_k1}
    Q(d|k=1, D) = \int_0^{d_1} q(t|k=1, D) \quad \mathrm{d}t = 2^{-D/d_1}
\end{equation}

The expectation of $d_k$ diverges for $k=1$-- but not for $k>1$ -- although the median exists.

From Eq. \ref{eq:szepes_cdf_k1} the median is at $D$ (Eq. \ref{eq:szepes_median}). 
\begin{equation}\label{eq:szepes_median}
    Q(d_1=D) = 0.5
\end{equation}

\subsubsection*{Sampling distribution of the median}

We can easily compute the pdf of the sample median if an odd sample size is given ($n = 2l + 1$) and if sample points are drawn independently according to Eq. \ref{eq:szepes_pdf}.
Roughly half of the points have to be smaller, half of the points have to be bigger and one point has to be exactly at $m$ (Eq. \ref{eq:median}). 

\begin{equation}\label{eq:median}
    \begin{split}
        p(m | k, D, n) &=  \frac{1}{B(l+1, l+1)} \left[ P\left(a=2^{-D/m}\right) \left(1 - P\left(a = 2 ^{-D/m}\right) \right) \right] ^ {l} q(m)
    \end{split}
\end{equation}
Where $p(a)$ and $P(a)$ are the pdf and cdf of $a$ (Eq.\,\ref{eq:beta},\,\ref{eq:Pa}) and $q$ is the pdf of the FSA estimator (Fig. \ref{fig:median_pdf}).

\begin{figure}[htb!]
    \centering
    \includegraphics[width=0.9\textwidth]{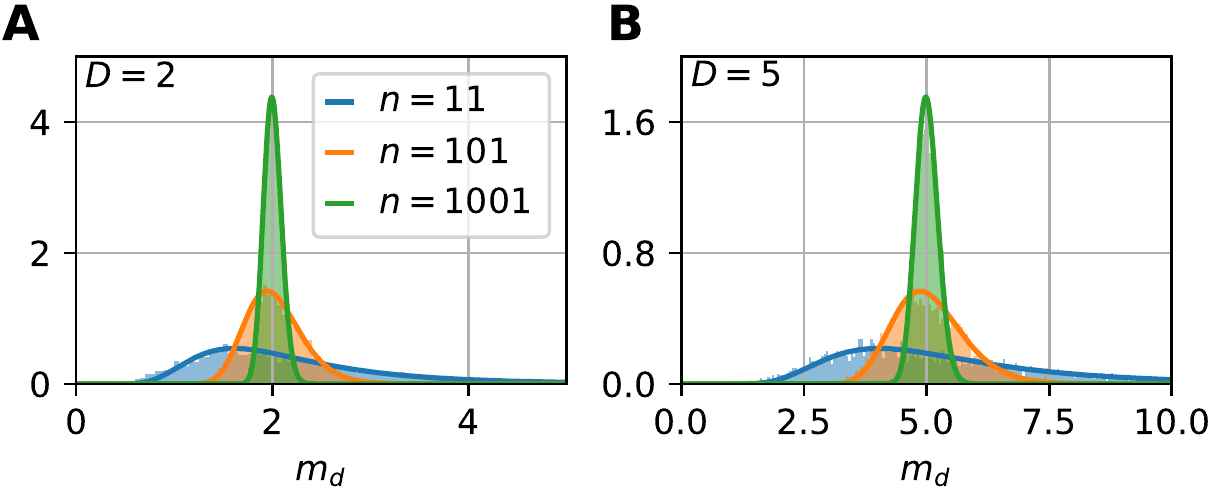}
    \caption{\textbf{The sampling distribution of the median for the FSA estimator ($k=1$) on uniformly sampled hypercubes.}
    The figure shows the pdf of median-FSA estimator of points uniformly sampled from a square (\textbf{A}) and from a 5D hypercube (\textbf{B}) for three sample sizes: $n=11$ (blue), $n=101$ (orange) and $n=1001$ (green) respectively. The solid lines represent the theoretical pdf-s of the median and the shaded histograms are the results of simulations ($N=5000$ realizations).
    }
    \label{fig:median_pdf}
\end{figure}

\subsubsection*{Maximum Likelihood solution for the manifold-adaptive estimator}

If the samples are independent and identically distributed, we can formulate the likelihood function as the product of sample-likelihoods (Eq. \ref{eq:FS_L}).
We seek for the maximum of the log likelihood function, but the derivative is transcendent for $k>1$.
Therefore, we can compute the place of the maximum numerically (Eq. \ref{eq:FS_ML}).

\begin{eqnarray}
    \mathcal{L} &=& \prod_{i=1}^{n} \frac{D \log{(2)}}{B(k, k)}  \frac{2 ^{-D k / {d_k}^{(i)}}  (1-2^{-D / {d_k}^{(i)}})^{k-1}}{{\left({d_k}^{(i)}\right)}^{2}} \label{eq:FS_L}\\
    \log \mathcal{L} &=& n \log \frac{\log{(2)}}{B(k, k)}  + n \log D - D k \log(2) \sum \frac{1}{{d_k}^{(i)}} + (k-1) \sum \log{ \left(1-2 ^ {-D / {d_k} ^ {(i)}} \right)} \\&&- 2 \sum \log({d_k}^{(i)})\nonumber\\
    \frac{\partial \log \mathcal{L}}{\partial D} &=& \frac{n}{D} - \log(2) k \sum \frac{1}{{d_k}^{(i)}} + \log(2) (k-1) \sum \frac{1}{{d_k}^{(i)} (2^{D/{d_k}^{(i)}} - 1)} \stackrel{!}{=} 0 \label{eq:FS_ML}
\end{eqnarray}

For $k=1$, the ML formula is equal to the Levina-Bickel ($k=1$) and $\mathrm{MIND}_{\mathrm{1ML}}$ formulas.

\subsection*{Results on randomly sampled hypercube datasets}
Theoretical probability density function of the local FSA estimator fits to empirical observations (Eq. \ref{eq:szepes_pdf}, Fig. \ref{fig:szepes_pdf}).
We simulated hypercube datasets with fixed sample size ($n=10000$) and of various intrinsic dimensions ($D=2, 3, 5, 8, 10, 12$).
We measured the local FSA estimator at each sample point with $3$ different $k$ parameter values ($k=1, 11, 50$).
We visually confirmed that the theoretical pdf fits perfectly to the empirical histograms.   

The empirical sampling distribution of mFSA fits to the theoretical curves for small intrinsic dimension values (Fig. \ref{fig:median_pdf}).
To demonstrate the fit, we drew the density of mFSA on two hypershpere datasets $D=2$ and $D=5$ with the smallest possible neighborhood ($k=1$), for different sample sizes ($n=11$, $101$, $1001$).
At big sample sizes the pdf is approximately a Gaussian\citep{Laplace1986}, but for small samples the pdf is non-Gaussian and skewed.

The mFSA estimator underestimates intrinsic dimensionality in high dimensions.
This phenomena is partially a finite sample effect (Fig.\,\ref{fig:szepes_ddep}), but edge effects make this underestimation even more severe.
We graphically showed that mFSA estimator asymptotically converged to the real dimension values for hypercube-datasets, when we applied periodic boundary conditions (Fig. \ref{fig:szepes_convergence}).
We found, that the convergence is much slower for hard boundary conditions, where edge effects make estimation errors higher.

We could estimate the logarithm of relative error with an $s$-order polynomial:

\begin{equation}
   \log(E_{rel}) = \log \left( \frac{D}{d} \right) = \sum_{i=1}^s \alpha_i d^i 
\end{equation}

The order of the polynomial was different for the two types of boundary conditions.
When we applied hard boundary, the order was $s=1$, however in the periodic case higher order polynomials fit the data. 
Thus, in the case of hard-boundary, we could make the empirical correction formula:
\begin{equation}
   D \approx C(\hat{d}) = d e^{\alpha_n d} 
\end{equation}
where $\alpha_n$ is a sample size dependent coefficient that we could fit with the least squares method.

\begin{figure}[t!]
    \centering
    \includegraphics[width=0.9\textwidth]{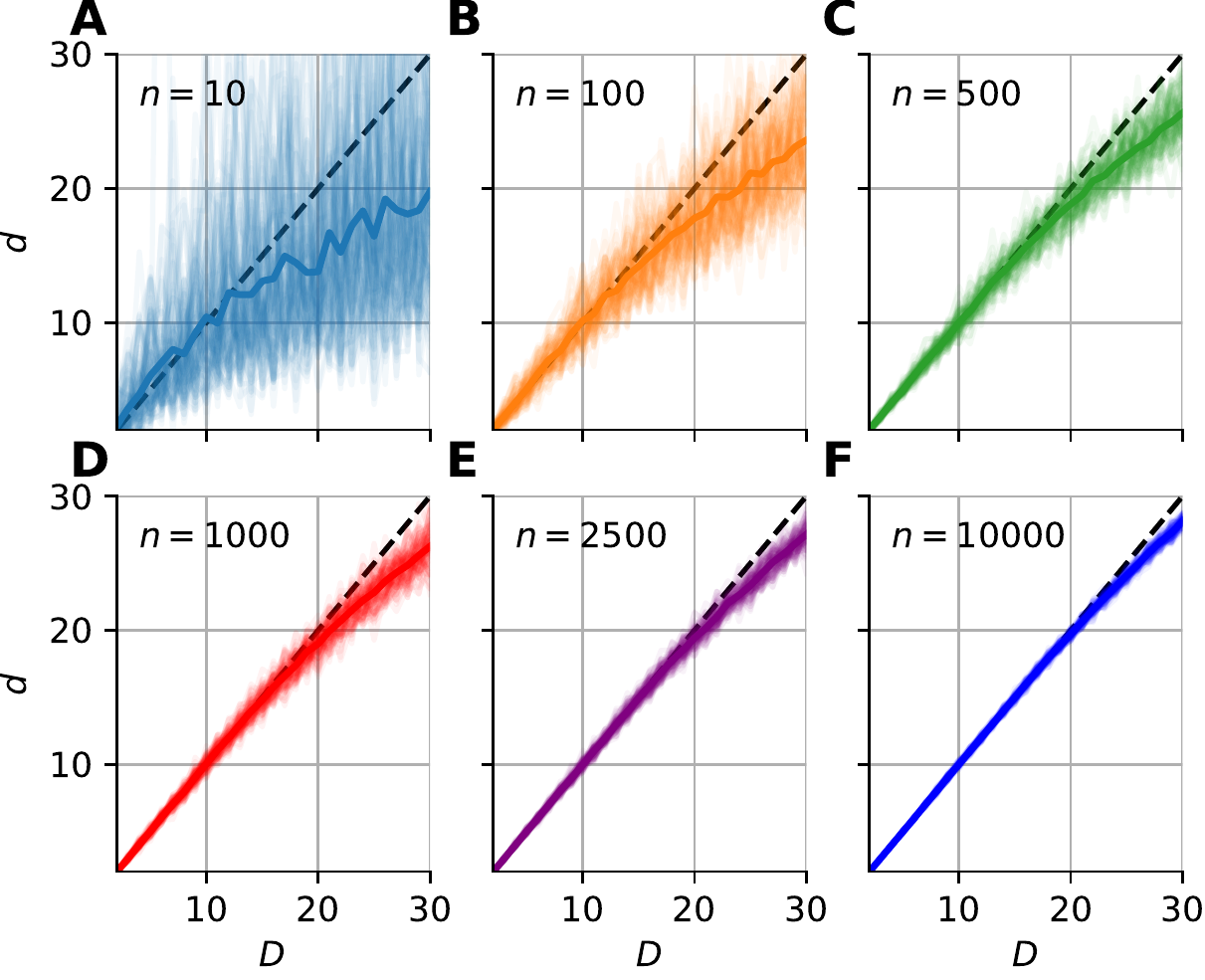}
    \caption{\textbf{Intrinsic dimension dependence of the median-FSA estimator for uniformly sampled unit hypercubes with various sample sizes ($k=1$).} Subplots \textbf{A-F} show the mean of median-FSA estimator (thick line) values from $N=100$ realizations (shading) of uniformly sampled unit hypercubes with periodic boundary. }
    \label{fig:szepes_ddep}
\end{figure}

\begin{figure}[t!]
    \centering
    \includegraphics[width=0.9\textwidth]{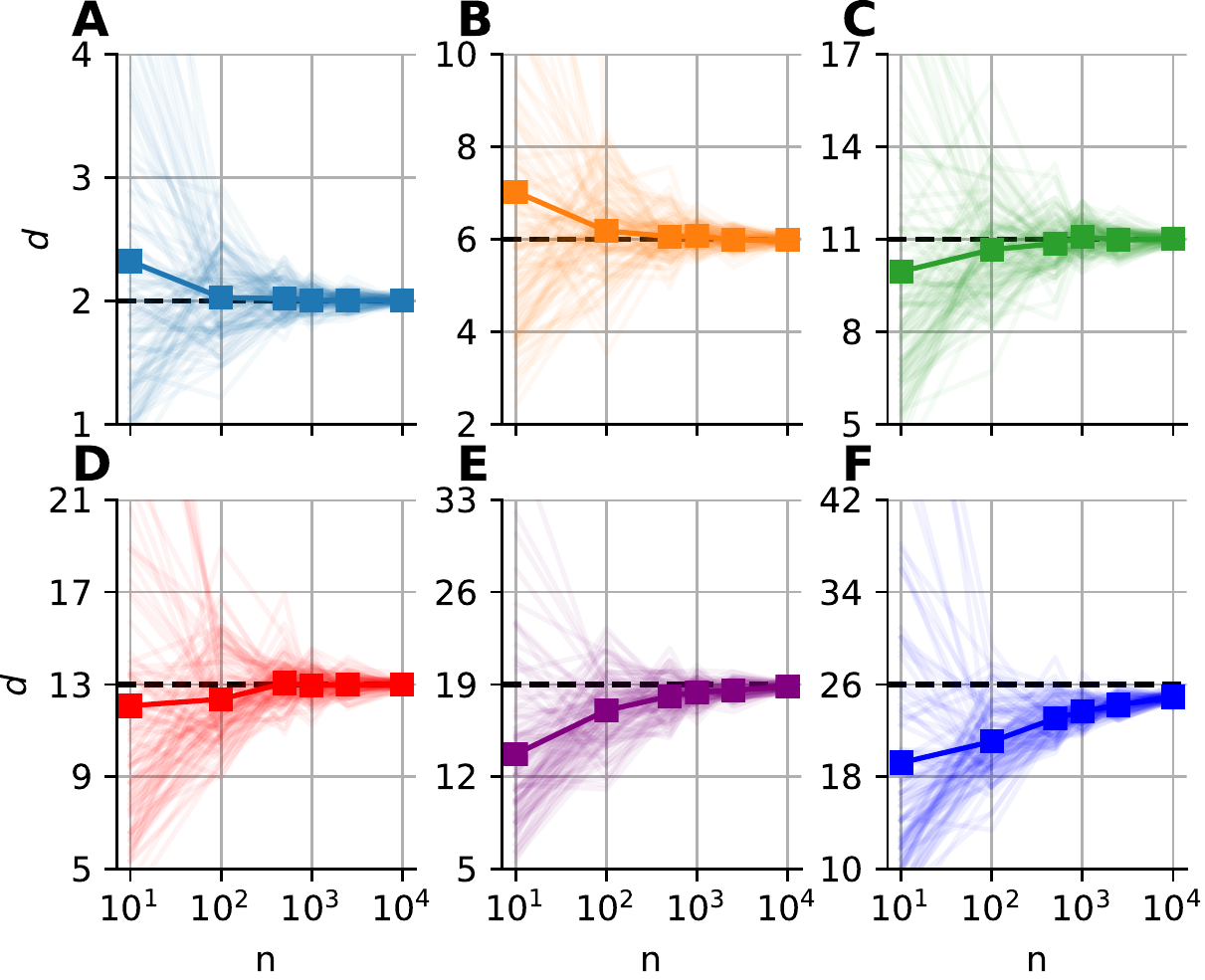}
    \caption{\textbf{Sample size dependence of the median-FSA estimator for uniformly sampled unit hypercubes with varied intrinsic dimension value ($k=1$).} Subplots \textbf{A-F} show the mean of median-FSA estimator (thick line) values from $N=100$ realizations (shading). 
    }
    \label{fig:szepes_convergence}
\end{figure}

\begin{figure}[htb!]
    \centering
    \includegraphics[width=0.9\textwidth]{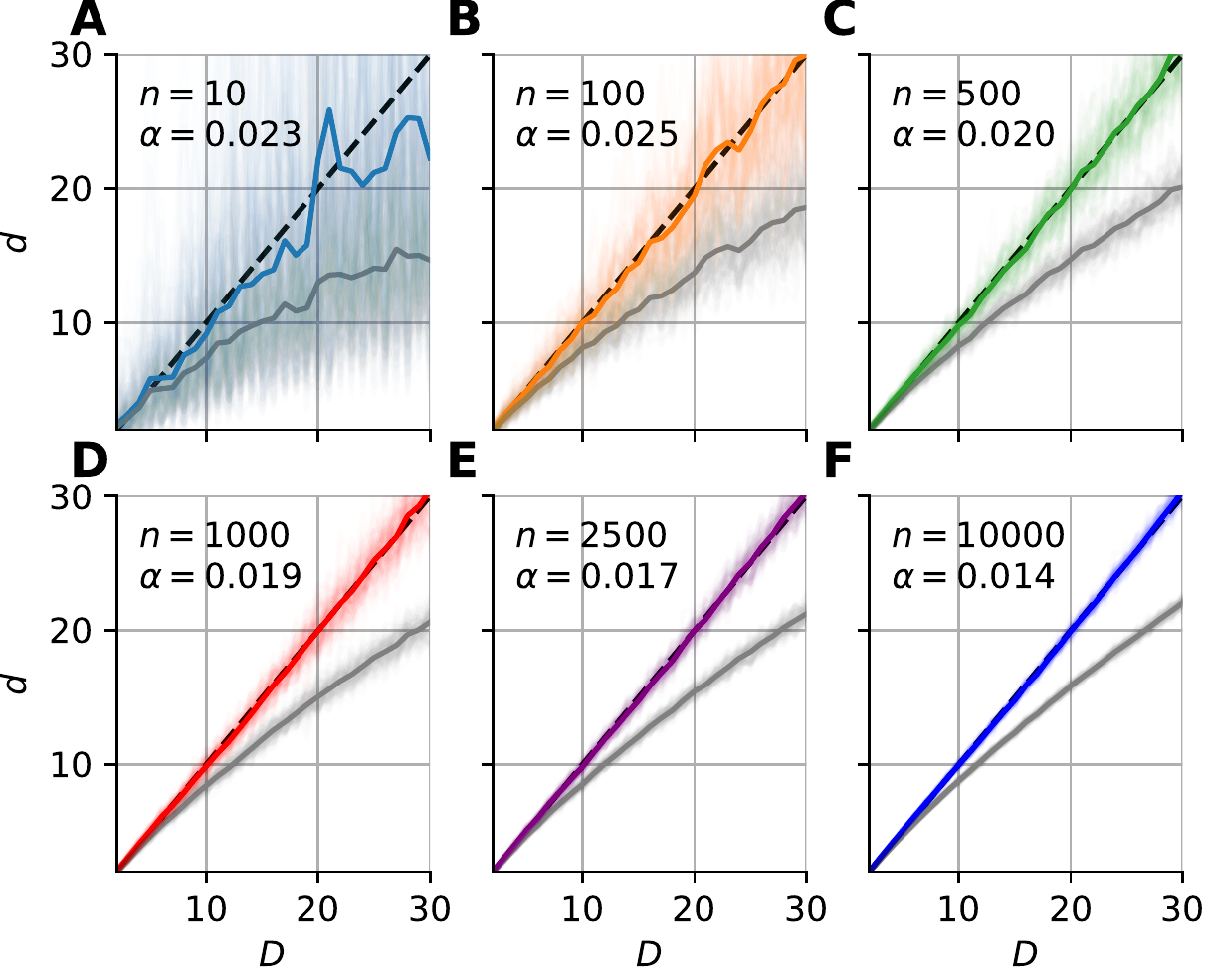}
    \caption{\textbf{Bias-correction of the median-FSA estimator for uniformly sampled unit hypercubes with various sample sizes ($k=1$).}  Subplots \textbf{A-F} show the mean of median-FSA estimator (grey line) values from $N=100$ realizations (shading) of uniformly sampled unit hypercubes. The boundary condition is hard, so the edge effect makes under-estimation more severe. The colored lines show the corrected estimates according to the $\hat{w}_c = \hat{w} \exp(\alpha \hat{w})$. }
    \label{fig:szepes_edgecorrect}
\end{figure}

\subsection*{Results on synthetic benchmarks}
We tested the mFSA estimator and its corrected version on synthetic benchmark datasets\citep{Hein2005, campadelli2015intrinsic}.
We simulated $N=100$ instances of $15$ manifolds ($M_i$, $n=2500$) with various intrinsic dimensions (see Table\,1,\,2,\,4 in Campadelli et al.\citep{campadelli2015intrinsic}, \href{http://www.mL.uni-saarland.de/code/IntDim/IntDim.htm}{http://www.mL.uni-saarland.de/code/IntDim/IntDim.htm}).

We estimated the intrinsic dimensionality of each sample and computed the mean, the error rate and Mean Percentage Error (MPE) for the estimators.
We compared the mFS, cmFS, the R and the matlab implementation of DANCo, and the Levina-Bickel estimator (Table \ref{tab:synthetic}).
cmFSA and DANCo was evaluated in two modes, in a fractal-dimension mode and in an integer dimension mode.

The mFSA and the Levina-Bickel estimator underestimated intrinsic dimensionality, especially in the cases when the data had high dimensionality.

In contrast, the cmFSA (cmFSA) estimator found the true intrinsic dimensionality of the datasets, it reached the best overall error rate ($0.277$) and 2nd best MPE (Fig.\,\ref{fig:error}, Table\,\ref{tab:synthetic}).
In some cases, it slightly over-estimated the dimension of test datasets.
Interestingly, DANCo showed implementation-dependent performance, the matlab algorithm showed the 2nd beast error rate ($0.323$) and the best MPE value (Table\,\ref{tab:synthetic}).
The R version overestimated the dimensionality of datasets in most cases.

\begin{figure}[htb!]
    \includegraphics[width=\textwidth]{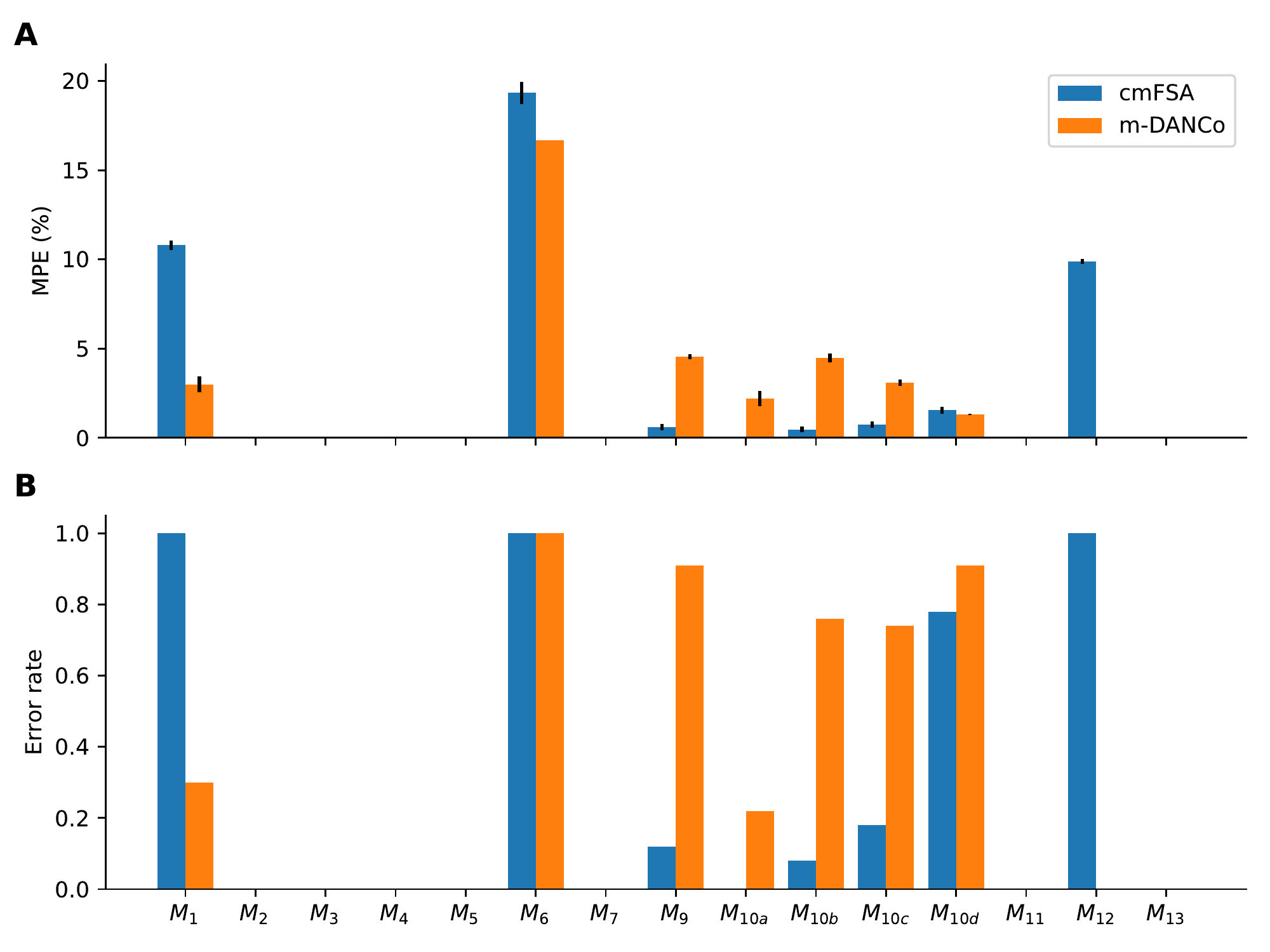}
    \caption{\textbf{Performance-comparison between cmFSA and DANCo on synthetic benchmark datasets.}
    \textbf{A} Dataset-wise Mean Percentage Error (MPE) on benchmark data. cmFSA (blue) shows smaller MPE in $4$ cases and bigger MPE in $4$ cases compared with DANCo (matlab).  
    \textbf{B} Dataset-wise error rate for cmFSA and DANCo. cmFSA shows smaller error rates in $5$ cases and bigger error rates in $2$ cases compared with DANCo.
    }
    \label{fig:error}
\end{figure}

\begin{table}[htb!]
    \centering
    \caption{\newline\textbf{Dimension estimates on synthetic benchmark datasets.}
    \newline
    The table shows true dimension values, median-Farahmand-Szepesvári-Audibert, Maximum Likelihood, corrected median Farahmand-Szepesvári-Audibert and DANCo mean estimates from $N=100$ realizations.
    The MPE values can be seen in the bottom line, the matlab version of DANCo estimator produced the smallest error followed by the cmFSA estimator.}
    \begin{tabular}{llrrrrrrrr}
\toprule
{} &  dataset &   d &    mFSA &  cmFSA$_{fr}$ &   cmFSA &  R-DANCo &  M-DANCo$_{fr}$ &  M-DANCo &  Levina \\
\midrule
1  &      $M_1$ &  10 &   9.09 &      11.19 &  11.08 &    12.00 &         10.42 &    \winner{10.30} &    9.40 \\
2  &      $M_2$ &   3 &   2.87 &       3.02 &   3.00 &     3.00 &          2.90 &     3.00 &    2.93 \\
3  &      $M_3$ &   4 &   3.83 &       4.14 &   4.00 &     5.00 &          3.84 &     4.00 &    3.86 \\
4  &      $M_4$ &   4 &   3.95 &       4.29 &   4.00 &     5.00 &          3.92 &     4.00 &    3.92 \\
5  &      $M_5$ &   2 &   1.97 &       2.00 &   2.00 &     2.00 &          1.98 &     2.00 &    1.99 \\
6  &      $M_6$ &   6 &   6.38 &       7.38 &   7.16 &     9.00 &          6.72 &     7.00 &    \winner{5.93} \\
7  &      $M_7$ &   2 &   1.95 &       1.98 &   2.00 &     2.00 &          1.96 &     2.00 &    1.98 \\
8  &      $M_9$ &  20 &  14.58 &      \winner{20.07} &  20.10 &    19.13 &         19.24 &    19.09 &   15.56 \\
9  &  $M_{10a}$ &  10 &   8.21 &       9.90 &  \winner{10.00} &    10.00 &          9.56 &     9.78 &    8.64 \\
10 &  $M_{10b}$ &  17 &  12.76 &      16.95 &  \winner{16.96} &    16.01 &         16.39 &    16.24 &   13.60 \\
11 &  $M_{10c}$ &  24 &  16.80 &      24.10 &  \winner{24.06} &    23.15 &         23.39 &    23.26 &   18.05 \\
12 &  $M_{10d}$ &  70 &  35.64 &      69.84 &  \winner{69.84} &    71.52 &         71.00 &    70.91 &   40.12 \\
13 &   $M_{11}$ &   2 &   1.97 &       2.00 &   2.00 &     2.00 &          1.97 &     2.00 &    1.98 \\
14 &   $M_{12}$ &  20 &  15.64 &      21.96 &  21.98 &    21.03 &         20.88 &   \winner{20.00} &   17.26 \\
15 &   $M_{13}$ &   1 &   1.00 &       0.96 &   1.00 &     1.00 &          1.00 &     1.00 &    1.00 \\
\bottomrule
& MPE & & 13.58 & 4.73 & 2.89 & 10.07 & 3.39 & 2.35 & 10.81 
\end{tabular}

    \label{tab:synthetic}
\end{table}

\subsection*{Analysing epileptic seizures}
To show how mFSA works on real-world noisy data, we applied it to human neural recordings of epileptic seizures.

We acquired field potential measurements from a patient with drug-resistant epilepsy by 2 electrode grids and 3 electrode strips.
We analyzed the neural recordings during interictal periods and during epileptic activity to map possible seizure onset zones (see Methods).

We found several characteristic differences in the dimension patterns between normal and control conditions.
In interictal periods (Fig.\,\ref{fig:memo_dims}\,A), we found the lowest average dimension value at the FbB2 position on the froto-basal grid.
Also, we observed a diagonal gradient of intrinsic dimensions on the cortical grid (Gr).
In contrast, we observed the lowest dimension values at the hippocampal electrode strip (JT), and the gradient on the cortical grid dissappeared during seizures (Fig.\,\ref{fig:memo_dims}\,B).
Curiously, the intrinsic dimensionality became higher at fronto-basal recording sites during seizure (Fig.\,\ref{fig:memo_dims}\,C).

\begin{figure}[t!]
    \centering
    \includegraphics[width=0.9\textwidth]{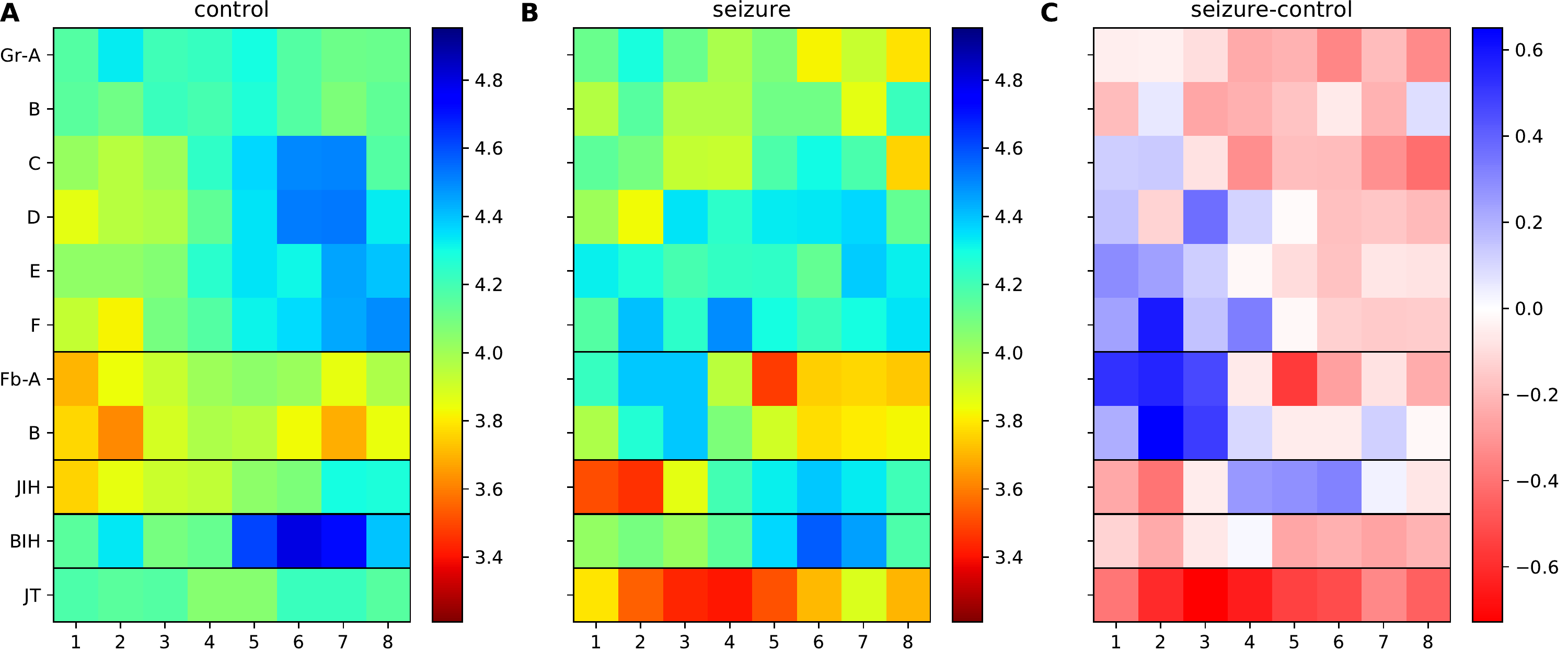}
    \caption{\textbf{mFSA Dimension estimates on intracranial Brain-LFP measurements during interictal activity and epileptic seizures.}
    The figure shows the dimension estimates on an intracranial cortical grid (Gr A-F), a smaller Frontobasal grid (Fb A, B) and 3 electrode strips with hippocampal and temporal localization (JIH, BIH, JT). The areas with lower-dimensional dynamics are marked by stronger colors.
    \textbf{A} Average of mFSA dimension values from interictal LFP activity (N=16, k=10-20).
    \textbf{B} Average of mFSA dimension values from seizure LFP activity (N=18, k=10-20).
    \textbf{C} Difference of dimension values.
    Stronger red color marks areas, where the dynamics during seizure was smaller-dimensional than its interictal counterpart.
    However, stronger blue indicates electrodes, where the during-seizure dynamics was higher dimensional than the interictal dynamics.
    }
    \label{fig:memo_dims}
\end{figure}

\section*{Discussion}
In this work we revisited and improved the manifold adaptive FSA dimension estimator.
We computed the probability density function of local estimates if the local density was uniform.
From the pdf, we derive the maximum likelihood formula for intrinsic dimensionality.

We proposed to use the median of local estimates as a global measure of intrinsic dimensionality, and demonstrated that this measure is asymptotically unbiased.

We tackled edge effects with a correction formula calibrated on hypercube datasets.
We showed that the coefficients are sample-size dependent.
Camastra and Vinciarelli \citep{Camastra2002} took a resembling empirical approach, where they corrected correlation dimension estimates with a perceptron, calibrated on d-dimensional datasets.
Our approach is different, because we tried to grasp the connection between underestimation and intrinsic dimensionality more directly, by showing that the dimension-dependence of the relative error is exponential.
The calibration procedure of DANCo may generalize better, because it compares the full distribution of local estimates rather than just a centrality measure\citep{Ceruti2014}.
Also, we are aware that our simple correction formula overlooks the effect of curvature and noise.
We tried to address the former with the choice of minimal neighborhood size ($k=1$), thus the overestimation effect due to curvature is minimal.
Additionally, the effect of noise on the estimates is yet to be investigated. There are several strategies to alleviate noise effects such as undersample the data while keeping the neighborhood fixed\citep{Facco2017}, or using a bigger neighborhood size , while keeping the sample size fixed.
Both of these procedures make the effect of curvature more severe, which makes the dimension estimation of noisy curved data a challenging task.

We benchmarked the new mFSA and corrected-mFSA method against Levina-Bickel estimator and DANCo on synthetic benchmark datasets and found that cmFSA showed comparable performance to DANCo.
For many datasets, R-DANCo overestimated the intrinsic dimensionality, which is most probably due to rough default calibration\citep{Johnsson2015}; the matlab implementation showed the best overall results in agreement with Campadelli et al\citep{campadelli2015intrinsic}.
This superiority was however dataset-specific: cmFSA performed genuinely the best in $4$, DANCo in $2$ out of the 15 benchmark datasets (with 7 ties, Table\,\ref{tab:synthetic}).
Also, cmFSA showed better overall error rate than DANCo.
Combining the performance measured by different metrics, we recognise that cmFSA found the true intrinsic dimension of the data in more cases, but when mistaken,  it makes relatively bigger errors compared with DANCo.

The mFSA algorithm revealed diverse changes in the neural dynamics during epileptic seizures.
In normal condition, the gradient of dimension values on the cortical grid reflects the hierarchical organization of neocortical information processing\citep{Tajima2015}.
During seizures, this pattern becomes disrupted pointing to the breakdown of normal activation routes. 
Some channels showed lower dimensional dynamics during seizures; that behaviour is far from the exception: the decrease in dimensionality is due to widespread synchronization events  between neural populations\citep{Mormann2000}, a phenomenon reported by various authors \citep{Polychronaki2010, Bullmore1994, Paivinen2005}.
These lower-dimensional areas are possible causal sources\citep{Sugiyama2013, Krakovska2019, Benko2018} and candidates for being the seizure onset zone.
Interestingly, Esteller et al found, that the Higuchi fractal dimension values were higher at seizure onset and decreased to lower values as the seizures evolved over time\citep{Esteller1999}. 
We found, that most areas showed decreased dimensionality, but few areas also showed increased dimension values as seizure takes place.
This may suggests that new - so far unused - neural circuits are activated at seizure onset; whether this circuitry contributes to or counteracts epileptic seizure is unclear.

\section*{Methods}

The simulations and the FSA algorithms were implemented in python3\citep{python3} using the numpy\citep{oliphant2006guide}, scipy\citep{2020SciPy-NMeth} and matplotlib\citep{hunter2007matplotlib} packages, unless otherwise stated. 

\subsection*{Simulations}

We generated test-datasets by uniform random sampling from the unit $D$-cube to demonstrate, that theoretical derivations fit to data.
We measured distances with a circular boundary condition to avoid edge effects, hence the data is as close to the theoretical assumptions as possible.

To illustrate the probability density function of the FSA estimator, we computed the local FSA intrinsic dimension values (Fig.\,\ref{fig:szepes_pdf}).
We generated $d$-hypercubes ($n=10000$, one realization) with dimensions of $2$, $3$, $5$, $8$, $10$ and $12$, then computed histograms of local FSA estimates for three neighborhood sizes: $k=1$, $11$, $50$ respectively (Fig.\,\ref{fig:szepes_pdf}\,A-F).
We drew the theoretically computed pdf to illustrate the fit.

To show that the theoretically computed sampling distribution of the mFSA fits to the hypercube datasets, we varied the sample size ($n=11, 101, 1001$) with $N=5000$ realizations from each.
We computed the mFSA for each realization and plotted the results for $d=2$ (Fig.\,\ref{fig:median_pdf}\,A) and $d=5$ (Fig.\,\ref{fig:median_pdf}\,B).

We investigated the dimensionality and sample-size effects on mFSA estimates ( Fig.\,\ref{fig:szepes_ddep}\,A-F).
We simulated the hypercube data in the $2$-$30$ dimension-range, and applied various sample sizes: $n=10, 100, 1000, 2500, 10000$; one hundred realizations each ($N=100$).
We computed the mFSA values with minimal neighborhood size ($k=1$), and observed finite-sample-effects, and asymptotic convergence.
The finite sample effect was pronounced at low sample sizes and high dimensions, but we experienced convergence to the real dimension value as we increased sample size.
We repeated the analysis with hard boundary conditions.

We fitted a correction formula on the logarithm of dimension values and estimates with the least squares method (Eq.\,\ref{eq:alpha}), using all $100$ realizations for each sample sizes separately.
\begin{equation}\label{eq:alpha}
    \alpha = \frac{\sum (\ln{E_i}) d^{(i)}}{\sum \left(d^{(i)}\right)^2}
\end{equation}
Where $E_i = D_i / d^{(i)}$ is the relative error, $D_i$ is the intrinsic dimension of the data, and $d^{(i)}$ are the corresponding mFSA estimates.
This procedure fit well to data in the intrinsic dimension range 2-30 (Fig.\,\ref{fig:szepes_edgecorrect}\,A-F).

Wider range of intrinsic dimension values (2-80) required more coefficients in the polynomial fit procedure (SFig.\,\ref{fig:calibration}\,A).
Also, we used orthogonal distance regression to fit the mean over realizations of $\ln E_i$ with the same $D_i$ value (SFig.\,\ref{fig:calibration}\,B).
We utilized the mean and standard deviation of the regression error to compute the error rate of cmFSA estimator, if the error-distributions are normal (SFig.\,\ref{fig:calibration}\,C-D). 
We applied this calibration procedure ($n=2500$) to compute cmFSA on the following benchmark datasets. 

\subsection*{Comparison on synthetic benchmark datasets}
We simulated $N=100$ instances of $15$ manifolds ($M_i$, $n=2500$) with various intrinsic dimensions (see Table\,1,\,2,\,4 in Campadelli et al.\citep{campadelli2015intrinsic}, \href{http://www.mL.uni-saarland.de/code/IntDim/IntDim.htm}{http://www.mL.uni-saarland.de/code/IntDim/IntDim.htm}).

We measured the performance of the mFSA and corrected-mFSA estimators on the benchmark datasets, and compared them with the performance of ML\citep{Levina2005} and DANCo\citep{Ceruti2014} estimators.
We used the matlab\citep{MATLAB2020, Lombardi2020}(\href{https://github.com/cran/intrinsicDimension}{https://github.com/cran/intrinsicDimension}) and an R package\citep{Johnsson2015} implementation of DANCo.

To quantify the performance we adopted the Mean Percentage Error (MPE, Eq.\ref{eq:mpe}) metric\citep{campadelli2015intrinsic}:
\begin{equation}\label{eq:mpe}
    \mathrm{MPE} = \frac{100}{M N} \sum_{j=1}^{M} \sum_{i=1}^{N} \frac{|D_j-\hat{d}_{ij}|}{D_j}
\end{equation}
Where there is $N$ realizations of $M$ types of manifolds, $d_j$ are the true dimension values, $\hat{d}_{ij}$ are the dimension estimates.

Also, we used the error rate -- the fraction of cases, when the estimator did not find (missed) the true dimensionality -- as an alternative metric.

We found that the corrected-mFSA estimator produced the second smallest MPE and the smallest error rate on the test datasets (Fig. \ref{fig:error}).

\subsection*{Dimension estimation of interictal and epileptic dynamics}

We used data of intracranial field potentials from two subdural grids positioned -- parietofrontally and frontobasally -- on the brain surface and from three strips located in the left and the right hippocampus and in the right temporal cortex as part of presurgical protocol for a subject with drug resistant epilepsy.
This equipment recorded extracellular field potentials at $88$ neural channels at a sampling rate of 2048 Hz.
Moreover, we read in -- using the neo package\citep{neo14}-- selected $10$ second long chunks of the recordings from interictal periods ($N=16$) and seizures ($N=18$) to further analysis.

We standardised the data series and computed the Current Source Density (CSD) as the second spatial derivative of the recorded potential.
We rescaled the $10$ second long signal chunks by subtracting the mean and dividing by the standard deviation.
Then, we computed the CSD of the signals by applying the graph Laplacian operator on the time-series.
The Laplacian contains information about the topology of the electrode grids, to encode this topology, we used von Neumann neighborhood in the adjacency matrix. After CSD computation, we bandpass-filtered the CSD signals\citep{Gramfort2013} (1-30 Hz, fourth order Butterworth filter) to improve signal to noise ratio. 

We embedded CSD signals and subsampled the embedded time series.
We used an iterative manual procedure to optimize embedding parameters (SFig. \ref{fig:memo_embed}).
Since the fastest oscillation is (30 Hz) in the signals, a fixed value with one fourth period ($2048 / 120 \approx 17$ samples) were used as embedding delay.
We inspected the average space-time separation plots of CSD signals to determine a proper subsampling, (with embedding dimension of D=2 (Fig. \ref{fig:memo_dims}\,A).
We found, that the first local maximum of the space-time separation was at around $5$ ms: $9-10$, $10-11$, $11-12$ samples for the $1 \%$, $25 \%$, $50\%$ percentile contour-curves respectively. 
Therefore, we divided the embedded time series into 10 subsets to ensure the required subsampling.
Then, we embedded the CSD signal up to $D=12$ and measured the intrinsic dimensionality for each embeddings (Fig. \ref{fig:memo_dims}\,B\,C).
We found that intrinsic dimension estimates started to show saturation at $D>=3$, therefore we chose $D=7$ as a sufficiently high embedding dimension (averaged over $k=10-20$ neighborhood sizes).

We measured the intrinsic dimensionality of the embedded CSD signals using the mFSA method during interictal and epileptic episodes (Fig.  \ref{fig:memo_dims}).
We selected the neighborhood size between $k=10$ and $k=20$ and averaged the resulting estimates over the neighborhoods and subsampling realizations.
We investigated the dimension values (Fig. \ref{fig:memo_dims}\,A\,B) and differences (Fig. \ref{fig:memo_dims}\,C) in interictal and in epileptic periods.

We found characteristic changes in the pattern of intrinsic dimensions during seizures, which may help to localize seizure onset zone.

\section*{Acknowledgments}
We are grateful for Ádám Zlatniczki for his comments on the manuscript.

\section*{Author contributions}
Zsigmond Benkő performed the analytical and numerical calculations and wrote the manuscript.

Marcell Stippinger corrected analytical calculations, wrote python code for numerical calculations and corrected the manuscript. 

Roberta Rehus carried out exploratory data analysis and proofreading. 

Dániel Fabó, Boglárka Hajnal, Loránd Erőss recorded the EEG data, helped with data analysis and contributed to the manuscript text.

Attila Bencze and András Telcs had profound effect on the mFSA derivations and contributed to the manuscript.

Zoltán Somogyvári led the research, helped to interpret the results of data analysis and contributed to the text.

\section*{Funding}
The research reported in this paper was supported by the BME NC TKP2020 grant of NKFIH Hungary, by the BME-Artificial Intelligence FIKP grant of EMMI (BME FIKP-MI/SC), by the National Brain Research Program of Hungary (NAP-B, KTIA\_NAP\_12-2-201), by the National Brain Project  II, NRDIO Hungary, PATTERN Group, and by 2017-1.2.1-NKP-2017-00002 of NKFIH.

\bibliographystyle{unsrt}
\bibliography{dimension}

\begin{thebibliography}{10}

\bibitem{Grassberger1983}
Peter Grassberger and Itamar Procaccia.
\newblock {Measuring the strangeness of strange attractors}.
\newblock {\em Physica D: Nonlinear Phenomena}, 1983.

\bibitem{Houle2018177}
M~E Houle, E~Schubert, and A~Zimek.
\newblock {On the Correlation Between Local Intrinsic Dimensionality and
  Outlierness}.
\newblock {\em Lecture Notes in Computer Science (including subseries Lecture
  Notes in Artificial Intelligence and Lecture Notes in Bioinformatics)}, 11223
  LNCS:177--191, 2018.

\bibitem{Dlask2017}
Martin Dlask and Jaromir Kukal.
\newblock {Correlation Dimension Estimation from EEG Time Series for Alzheimer
  Disease Diagnostics}.
\newblock In {\em Proceedings of the International Conference on Bioinformatics
  Research and Applications 2017 - ICBRA 2017}, pages 62--65. ACM Press, 2017.

\bibitem{Polychronaki2010}
G.~E. Polychronaki, P.~Y. Ktonas, S.~Gatzonis, A.~Siatouni, P.~A. Asvestas,
  H.~Tsekou, D.~Sakas, and K.~S. Nikita.
\newblock {Comparison of fractal dimension estimation algorithms for epileptic
  seizure onset detection}.
\newblock {\em Journal of Neural Engineering}, 7(4), 2010.

\bibitem{Sharma2017}
Manish Sharma, Ram~Bilas Pachori, and U.~{Rajendra Acharya}.
\newblock {A new approach to characterize epileptic seizures using analytic
  time-frequency flexible wavelet transform and fractal dimension}.
\newblock {\em Pattern Recognition Letters}, 94:172--179, jul 2017.

\bibitem{Acharya2013}
U.~Rajendra Acharya, S.~Vinitha Sree, G.~Swapna, Roshan~Joy Martis, and
  Jasjit~S. Suri.
\newblock {Automated EEG analysis of epilepsy: A review}.
\newblock {\em Knowledge-Based Systems}, 45:147--165, jun 2013.

\bibitem{Sugiyama2013}
Mahito Sugiyama and Karsten~M. Borgwardt.
\newblock {Measuring statistical dependence via the mutual information
  dimension}.
\newblock {\em IJCAI International Joint Conference on Artificial
  Intelligence}, pages 1692--1698, 2013.

\bibitem{Romano2016}
Simone Romano, Oussama Chelly, Vinh Nguyen, James Bailey, and Michael~E. Houle.
\newblock {Measuring dependency via intrinsic dimensionality}.
\newblock In {\em 2016 23rd International Conference on Pattern Recognition
  (ICPR)}, number~4, pages 1207--1212. IEEE, dec 2016.

\bibitem{Benko2018}
Zsigmond Benkő, {\'{A}}d{\'{a}}m Zlatniczki, Marcell Stippinger, D{\'{a}}niel
  Fab{\'{o}}, Andr{\'{a}}s S{\'{o}}lyom, Lor{\'{a}}nd Erőss, Andr{\'{a}}s
  Telcs, and Zolt{\'{a}}n Somogyv{\'{a}}ri.
\newblock {Complete Inference of Causal Relations between Dynamical Systems}.
\newblock {\em arXiv}, aug 2018.

\bibitem{Krakovska2019}
Anna Krakovsk{\'{a}}.
\newblock {Correlation dimension detects causal links in coupled dynamical
  systems}.
\newblock {\em Entropy}, 21(9), 2019.

\bibitem{campadelli2015intrinsic}
P~Campadelli, E~Casiraghi, C~Ceruti, and A~Rozza.
\newblock {Intrinsic Dimension Estimation: Relevant Techniques and a Benchmark
  Framework}.
\newblock {\em Mathematical Problems in Engineering}, 2015:1--21, 2015.

\bibitem{Levina2005}
Elizaveta Levina and Peter~J. Bickel.
\newblock {Maximum likelihood estimation of intrinsic dimension}.
\newblock In {\em Advances in Neural Information Processing Systems}. Neural
  information processing systems foundation, 2005.

\bibitem{Ghahramani2005}
Zoubin Ghahramani and David Mckay.
\newblock {Comments on 'Maximum likelihood estimation of intrinsic dimension'},
  2005.

\bibitem{Gupta2010}
Mithun~Das Gupta and Thomas~S. Huang.
\newblock {Regularized maximum likelihood for intrinsic dimension estimation}.
\newblock {\em Uai}, 1(1), 2010.

\bibitem{Rozza2012}
A~Rozza, G~Lombardi, C~Ceruti, E~Casiraghi, and P~Campadelli.
\newblock {Novel high intrinsic dimensionality estimators}.
\newblock {\em Machine Learning}, 89(1-2):37--65, 2012.

\bibitem{Bassis2015}
S~Bassis, A~Rozza, C~Ceruti, G~Lombardi, E~Casiraghi, and P~Campadelli.
\newblock {A Novel Intrinsic Dimensionality Estimator Based on Rank-Order
  Statistics}.
\newblock In Francesco Masulli, Alfredo Petrosino, and Stefano Rovetta,
  editors, {\em Clustering High--Dimensional Data}, pages 102--117, Berlin,
  Heidelberg, 2015. Springer Berlin Heidelberg.

\bibitem{Ceruti2014}
Claudio Ceruti, Simone Bassis, Alessandro Rozza, Gabriele Lombardi, Elena
  Casiraghi, and Paola Campadelli.
\newblock {DANCo: An intrinsic dimensionality estimator exploiting angle and
  norm concentration}.
\newblock {\em Pattern Recognition}, 47(8):2569--2581, 2014.

\bibitem{Facco2017}
Elena Facco, Maria {D 'errico}, Alex Rodriguez, and Alessandro Laio.
\newblock {Estimating the intrinsic dimension of datasets by a minimal
  neighborhood information}.
\newblock {\em Scientific Reports}, (August):1--8, 2017.

\bibitem{Camastra2002}
Francesco Camastra and Alessandro Vinciarelli.
\newblock {Estimating the intrinsic dimension of data with a fractal-based
  method}.
\newblock {\em IEEE Transactions on Pattern Analysis and Machine Intelligence},
  24(10):1404--1407, oct 2002.

\bibitem{Johnsson2015}
Kerstin Johnsson, Charlotte Soneson, and Magnus Fontes.
\newblock {Low Bias Local Intrinsic Dimension Estimation from Expected Simplex
  Skewness}.
\newblock {\em IEEE Transactions on Pattern Analysis and Machine Intelligence},
  37(1):196--202, jan 2015.

\bibitem{Chelly2016}
Oussama Chelly, Michael~E. Houle, and Ken~Ichi Kawarabayashi.
\newblock {Enhanced estimation of local Intrinsic Dimensionality using
  auxiliary distances}.
\newblock {\em NII Technical Reports}, (7), 2016.

\bibitem{Amsaleg2015}
Laurent Amsaleg, Oussama Chelly, Teddy Furon, St{\'{e}}phane Girard, Michael~E
  Houle, Ken-ichi Kawarabayashi, and Michael Nett.
\newblock {Estimating Local Intrinsic Dimensionality}.
\newblock In {\em Proceedings of the 21th ACM SIGKDD International Conference
  on Knowledge Discovery and Data Mining - KDD '15}, number~Cd, pages 29--38,
  New York, New York, USA, 2015. ACM Press.

\bibitem{Amsaleg2018}
Laurent Amsaleg, Oussama Chelly, Teddy Furon, St{\'{e}}phane Girard, Michael~E.
  Houle, Ken-ichi Kawarabayashi, and Michael Nett.
\newblock {Extreme-value-theoretic estimation of local intrinsic
  dimensionality}.
\newblock {\em Data Mining and Knowledge Discovery}, 32(6):1768--1805, nov
  2018.

\bibitem{Amsaleg2019}
Laurent Amsaleg, Oussama Chelly, Michael~E. Houle, Ken-ichi Kawarabayashi,
  Milo{\v{s}} Radovanovi{\'{c}}, and Weeris Treeratanajaru.
\newblock {Intrinsic Dimensionality Estimation within Tight Localities}.
\newblock In {\em Proceedings of the 2019 SIAM International Conference on Data
  Mining}, pages 181--189. Society for Industrial and Applied Mathematics, may
  2019.

\bibitem{Farahmand2007}
Amir~Massoud Farahmand, Csaba Szepesv{\'{a}}ri, and Jean-Yves Audibert.
\newblock {Manifold-adaptive dimension estimation}.
\newblock In {\em Proceedings of the 24th international conference on Machine
  learning - ICML '07}, pages 265--272. ACM Press, 2007.

\bibitem{Laplace1986}
Pierre~Simon Laplace.
\newblock {Memoir on the Probability of the Causes of Events}.
\newblock {\em Statistical Science}, 1(3):364--378, aug 1986.

\bibitem{Hein2005}
Matthias Hein and Jean-Yves Audibert.
\newblock {Intrinsic dimensionality estimation of submanifolds in R d}.
\newblock In {\em Proceedings of the 22nd international conference on Machine
  learning - ICML '05}, pages 289--296. ACM Press, 2005.

\bibitem{Tajima2015}
Satohiro Tajima, Toru Yanagawa, Naotaka Fujii, and Taro Toyoizumi.
\newblock {Untangling Brain-Wide Dynamics in Consciousness by Cross-Embedding}.
\newblock {\em PLOS Computational Biology}, 11(11):e1004537, nov 2015.

\bibitem{Mormann2000}
Florian Mormann, Klaus Lehnertz, Peter David, and Christian {E. Elger}.
\newblock {Mean phase coherence as a measure for phase synchronization and its
  application to the EEG of epilepsy patients}.
\newblock {\em Physica D: Nonlinear Phenomena}, 144(3):358--369, 2000.

\bibitem{Bullmore1994}
E.T. Bullmore, M.J. Brammer, P.~Bourlon, G.~Alarcon, C.E. Polkey, R.~Elwes, and
  C.D. Binnie.
\newblock {Fractal analysis of electroencephalographic signals intracerebrally
  recorded during 35 epileptic seizures: evaluation of a new method for
  synoptic visualisation of ictal events}.
\newblock {\em Electroencephalography and Clinical Neurophysiology},
  91(5):337--345, nov 1994.

\bibitem{Paivinen2005}
Niina P{\"{a}}ivinen, Seppo Lammi, Asla Pitk{\"{a}}nen, Jari Nissinen, Markku
  Penttonen, and Tapio Gr{\"{o}}nfors.
\newblock {Epileptic seizure detection: A nonlinear viewpoint}.
\newblock {\em Computer Methods and Programs in Biomedicine}, 79(2):151--159,
  aug 2005.

\bibitem{Esteller1999}
R.~Esteller, G.~Vachtsevanos, J.~Echauz, T.~Henry, P.~Pennell, C.~Epstein,
  R.~Bakay, C.~Bowen, and B.~Litt.
\newblock {Fractal dimension characterizes seizure onset in epileptic
  patients}.
\newblock In {\em 1999 IEEE International Conference on Acoustics, Speech, and
  Signal Processing. Proceedings. ICASSP99 (Cat. No.99CH36258)}, pages
  2343--2346 vol.4. IEEE, 1999.

\bibitem{python3}
Guido Van~Rossum and Fred~L. Drake.
\newblock {\em Python 3 Reference Manual}.
\newblock CreateSpace, Scotts Valley, CA, 2009.

\bibitem{oliphant2006guide}
Travis~E Oliphant.
\newblock {\em A guide to NumPy}, volume~1.
\newblock Trelgol Publishing USA, 2006.

\bibitem{2020SciPy-NMeth}
Pauli {Virtanen}, Ralf {Gommers}, Travis~E. {Oliphant}, Matt {Haberland}, Tyler
  {Reddy}, David {Cournapeau}, Evgeni {Burovski}, Pearu {Peterson}, Warren
  {Weckesser}, Jonathan {Bright}, St{\'e}fan~J. {van der Walt}, Matthew
  {Brett}, Joshua {Wilson}, K.~{Jarrod Millman}, Nikolay {Mayorov}, Andrew
  R.~J. {Nelson}, Eric {Jones}, Robert {Kern}, Eric {Larson}, CJ~{Carey},
  {\.I}lhan {Polat}, Yu~{Feng}, Eric~W. {Moore}, Jake {Vand erPlas}, Denis
  {Laxalde}, Josef {Perktold}, Robert {Cimrman}, Ian {Henriksen}, E.~A.
  {Quintero}, Charles~R {Harris}, Anne~M. {Archibald}, Ant{\^o}nio~H.
  {Ribeiro}, Fabian {Pedregosa}, Paul {van Mulbregt}, and SciPy 1.~0
  {Contributors}.
\newblock {SciPy 1.0: Fundamental Algorithms for Scientific Computing in
  Python}.
\newblock {\em Nature Methods}, 17:261--272, 2020.

\bibitem{hunter2007matplotlib}
John~D Hunter.
\newblock Matplotlib: A 2d graphics environment.
\newblock {\em Computing in science \& engineering}, 9(3):90--95, 2007.

\bibitem{MATLAB2020}
MATLAB.
\newblock {\em {MATLAB version 9.8.0.1396136 (R2020a)}}.
\newblock The Mathworks, Inc., Natick, Massachusetts, 2020.

\bibitem{Lombardi2020}
Gabriele Lombardi.
\newblock Intrinsic dimensionality estimation techniques.
\newblock {\em MATLAB Central File Exchange}, Retrieved July 16, 2020.

\bibitem{neo14}
S.~Garcia, D.~Guarino, F.~Jaillet, T.R. Jennings, R.~Pröpper, P.L. Rautenberg,
  C.~Rodgers, A.~Sobolev, T.~Wachtler, P.~Yger, and A.P. Davison.
\newblock Neo: an object model for handling electrophysiology data in multiple
  formats.
\newblock {\em Frontiers in Neuroinformatics}, 8:10, February 2014.

\bibitem{Gramfort2013}
Alexandre Gramfort, Martin Luessi, Eric Larson, Denis Engemann, Daniel
  Strohmeier, Christian Brodbeck, Roman Goj, Mainak Jas, Teon Brooks, Lauri
  Parkkonen, and Matti H{\"{a}}m{\"{a}}l{\"{a}}inen.
\newblock {MEG and EEG data analysis with MNE-Python}.
\newblock {\em Frontiers in Neuroscience}, 7:267, 2013.

\end{thebibliography}

\newpage
\section*{Supplemental information}
\label{sec:appendix}
\addcontentsline{toc}{section}{\nameref{sec:appendix}}
\appendix

\renewcommand\thefigure{\arabic{figure}}
\renewcommand\figurename{SFig.}
\setcounter{figure}{0}

\renewcommand\thetable{\arabic{table}}
\renewcommand\tablename{STable}
\setcounter{table}{0}

\renewcommand{\theequation}{S.\arabic{equation}}
\setcounter{equation}{0} 

\section{Calculations for normalized distances}
\subsection{Distance density of the $k$-th nearest neighbor}\label{si:r_pdf}
Let's take $K-1$ points in the unit $D$-sphere randomly, and we chose one with $r$ distance from the center.
This situation simulates a $K$-neighborhood, with normalized distances of $K-1$ points from the center.
The next formula tells us the probability that a selected point at $r$ was the $k$th from the center.

\begin{equation}
    P(k|r, K, D) = \binom{K-2}{k-1} r ^ {D(k-1)} (1 - r^D) ^ {K-k-1}
\end{equation}
here $r$ can take values from the $[0, 1]$ interval.

Moreover the probability density that there is a point at $r$ radius is given by the following derivation formula:

\begin{equation}\label{eq:prifd}
   p(r| D) = D r ^ {D-1}
\end{equation}
If sampling process is independent, the pdf that a point is on the radius $r$ from $K-1$ points is the same and independent of sample size:

\begin{equation}
    \begin{aligned}
    p(r|K-1, D) &= \sum_{j=1}^{n} \frac{1}{n}  \underbrace{\int \mathrm{d}r_1 \dots \int \mathrm{d}r_i \dots \int \mathrm{d}r_n}_{i \neq j}  p(r_1, r_2, \dots r_j=r, \dots r_n | D) \\
&= \sum_{j=1}^{n} \frac{1}{n}  \underbrace{\int \int \dots \int}_{n-1}  \prod_{i=1}^n D {r_i}^ {D-1} \underbrace{\mathrm{d}r_i}_{i \neq j} 
= \frac{1}{n} \sum_{j=1}^{n}  D {r_j}^ {D-1}
= D r ^ {D-1}
    \end{aligned}
\end{equation}

This is the prior pdf of distance, we assume uniform density in the n-sphere.
This prior can be any density, we chose this specific form with respect to the maximum entropy principle and also for practical reasons.

From the previous two formulas, we can write up the joint mixed probability function:

\begin{equation}
    p(k, r | K-1, D) = D \binom{K-2}{k-1} r ^ {Dk-1} (1 - r^D) ^ {n-k}
\end{equation}

Also:

\begin{equation}
    p(k|K-1, D) = \frac{1}{K-1}
\end{equation}

Using Bayes theorem, we derive the distance distribution of the $k$th neighbor:

\begin{align}
    p(r|k, K-1, D) &= \frac{P(k|r, K-1, D) p(r|K-1, D)}{p(k|K-1, D)}\\
    &= (K-1) D \binom{K-2}{k-1} r ^ {D k-1} (1 - r^D) ^ {K-k-1}\\
    &=  \frac{D}{B(k, K-k)}  r ^ {D k-1} (1 - r^D) ^ {K-k-1} \label{eq:pr}
\end{align}
Where $B$ is the beta function.

\subsection{Maximum Likelihood estimation of intrinsic dimension}
\label{si:ML}
Given a dataset, we can use the Maximum Likelihood principle to estimate intrinsic dimensionality by using Eq.\,\ref{eq:pr}.
The dataset is $K-1$ randomly sampled points inside a $d$-dimensional sphere.
But first we have to express the likelihood function: 

\begin{equation}
    \mathcal{L}(D | X) =  p(r_1, ..., r_{K-1}| D)
\end{equation}

This expression can be factorized into a chain because $p(r_k| r_{k+1}, r_{k+2}, ..., r_{K-1}) = p(r_k| r_{k+1})$  which is a Markov property of neighbor distances.

\begin{equation}
    \mathcal{L}(D | X) =  p(r_1, ..., r_{K-1}| D) = \prod_1^{K-1} p(r_k| r_{k+1}, D)
\end{equation}
where $r_{K}=1$.

\begin{equation}
    p(r_k| r_{k+1}, D) = k D \left( \frac{r_k}{r_{k+1}} \right)^{kD-1} \frac{1}{r_{k+1}} 
\end{equation}

So if we substitute back into the previous expression:

\begin{equation}
    \begin{split}
        \mathcal{L}(D | X) &=  p(r_1, ..., r_n| D) = \prod_1^{K-1} p(r_k| r_{k+1}, d)\\
 &= {(K-1)!} \, D^{K-1} \frac{r_1^{D-1}}{r_2^{D}} \frac{r_2^{2D-1}}{r_3^{2D}} \frac{r_3^{3D-1}}{r_4^{3D}} \dots \frac{r_{K-1}^{(K-1) D - 1}}{r_{K}^{(K-1) D}}\\
&= {(K-1)!} \, D^{K-1} \left(\prod_1^{K-1} r_k \right)^{D-1}
    \end{split}
\end{equation}

The log likelihood:

\begin{equation}
    \log{\mathcal{L}(D | X)} = \left(\sum_1^{K-1} \log{k} \right) + (K-1) \log{D} + (D-1) \sum_1^{K-1} \log{r_k}
\end{equation}

We seek for extrema of the likelihood function: 

\begin{equation}
    \begin{split}
        \frac{ \partial \log{ \mathcal{L}(D | X) }}{\partial D} \stackrel{!}{=} 0\\
        \frac{K-1}{D} + \sum_1^{K-1} \log{r_k} \stackrel{!}{=} 0 \\
    \end{split}
\end{equation}

\begin{equation}
    \boxed{ d_{\mathrm{ML}} = \frac{K-1}{-\sum_1^{K-1} \log{r_k}}}
\end{equation}

This latter formula is basically equivalent to the local  Levina-Bickel ML intrinsic dimension estimator if $r_k = \frac{R_k}{R_{K}}$.

\section{Derivation of the pdf of the FSA estimator}
The starting point of our derivation is the posterior density of $r$, computed in Section 1:

\begin{equation}
    p(r|k, K-1, D) = \frac{D}{B(k, K-k)}  r ^ {D k-1} (1 - r^D) ^ {K-k-1} 
\end{equation}

We fill in $K = 2k$ to the previous expression:

\begin{equation}
    p(r|k, 2k-1, D) = \frac{D}{B(k, k)}  r ^ {D k-1} (1 - r^D) ^ {k-1}  
\end{equation}

The pdf of $w$ can be expressed from the pdf of $r$ with simple intergal transform:

\begin{equation}
    p \left( r|k, 2k-1, D \right) \mathrm{d}r = q \left( d \right) \mathrm{d}d 
\end{equation}
so

\begin{equation}
     q \left( d \right) = p \left( r|k, 2k-1, D \right) \left| \frac{\mathrm{d}r}{\mathrm{d}d} \right|
\end{equation}

To compute the above expression, we first express $r$ as a function of $d$, then we compute the derivative. Afterwards we put the things together.

\begin{equation}
    \begin{split}
    d = -\frac{\log{2}}{\log{r}} \implies
    r= \exp{\left(-\frac{\log{2}}{d} \right)} \implies
    \frac{\mathrm{d}r}{\mathrm{d}d} = \exp{\left(-\frac{\log{2}}{d} \right)} \frac{\log{2}}{d^2}
    \end{split}
\end{equation}

And finally, we put together these parts to get the pdf of the FSA estimator (Fig. \ref{fig:szepes_pdf}):

\begin{eqnarray}
    q(d|k, D) &=& \frac{D}{B(k, k)}  e^{\left(-\frac{\log{2}}{d} (D k-1) \right)} \left(1 - e^{\left(-\frac{\log{2}}{d} D \right)} \right)^{k-1} e^{\left(-\frac{\log{2}}{d} \right)} \frac{\log{2}}{d^2}\nonumber=\\
     &=& \boxed{ \frac{D \log{(2)}}{B(k, k)}   \frac{2^{-\frac{D k}{d}} \left(1 - 2^{-\frac{D}{d} } \right)^{k-1}}{d^2} } \label{eq:szepes_pdf_s}
\end{eqnarray}

where $B(k, k)$ is the Euler beta function.

\section{Supplemental Figures and tables}

\begin{figure}
    \centering
    \includegraphics[width=\linewidth]{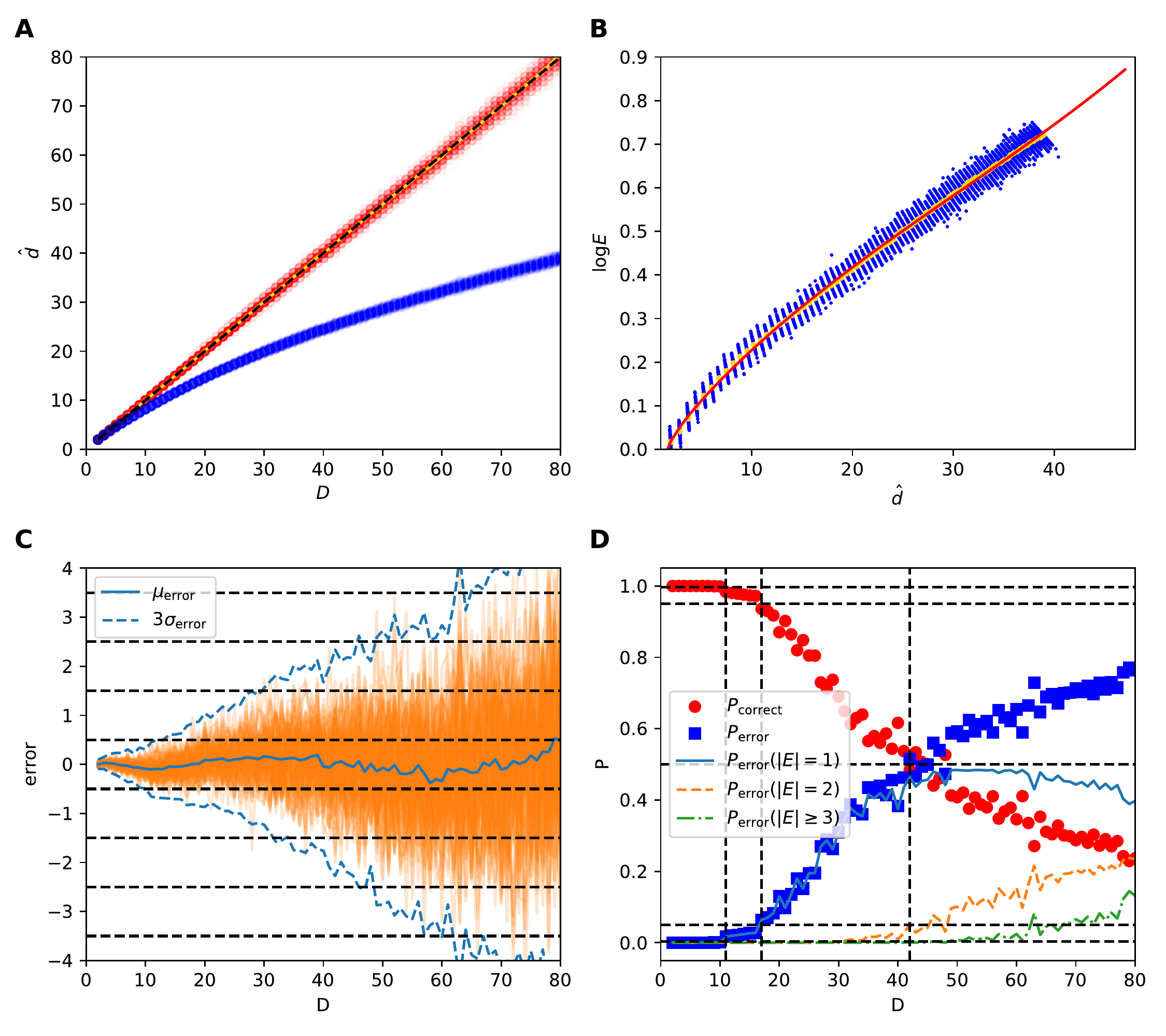}
    \caption{\textbf{Calibration procedure for the $n=2500$  datasets up to $D=80$ ($k=5$).} The figure shows the calibration procedure on $100$ instances of uniformly sampled hypercubes.
    \textbf{A} Dimension estimates in the function of intrinsic dimensionality for the calibration hypercubes. The diagonal (dashed) is the ideal value, however the mFSA estimates (blue) show saturation because of finite sample and edge effects. cmFSA estimates (red) are also shown, with the mean (yellow) almost aligned with the diagonal.
    \textbf{B} The relative error ($E$) in the function of uncorrected mFSA dimension on semilogarithmic scale. The error-mFSA pairs (blue) lie on a short stripe for each intrinsic dimension value. The subplot also shows id-wise average points (yellow) and the polynomial fitting curve (red).
    \textbf{C} The error of cmFSA estimates in the function of intrinsic dimension on the calibration datasets. The mean error (blue line) oscillates around zero and the $99.7\%$ confidence interval (blue dashed) widens as ID grows. The rounding switchpoints are also shown.
    \textbf{D} The probability that cmFSA hits the real ID of data, or misses by one, two or more as a function of ID on the calibration dataset. 
    }
    \label{fig:calibration}
\end{figure}

\begin{figure}[t!]
    \centering
    \includegraphics[width=0.9\textwidth]{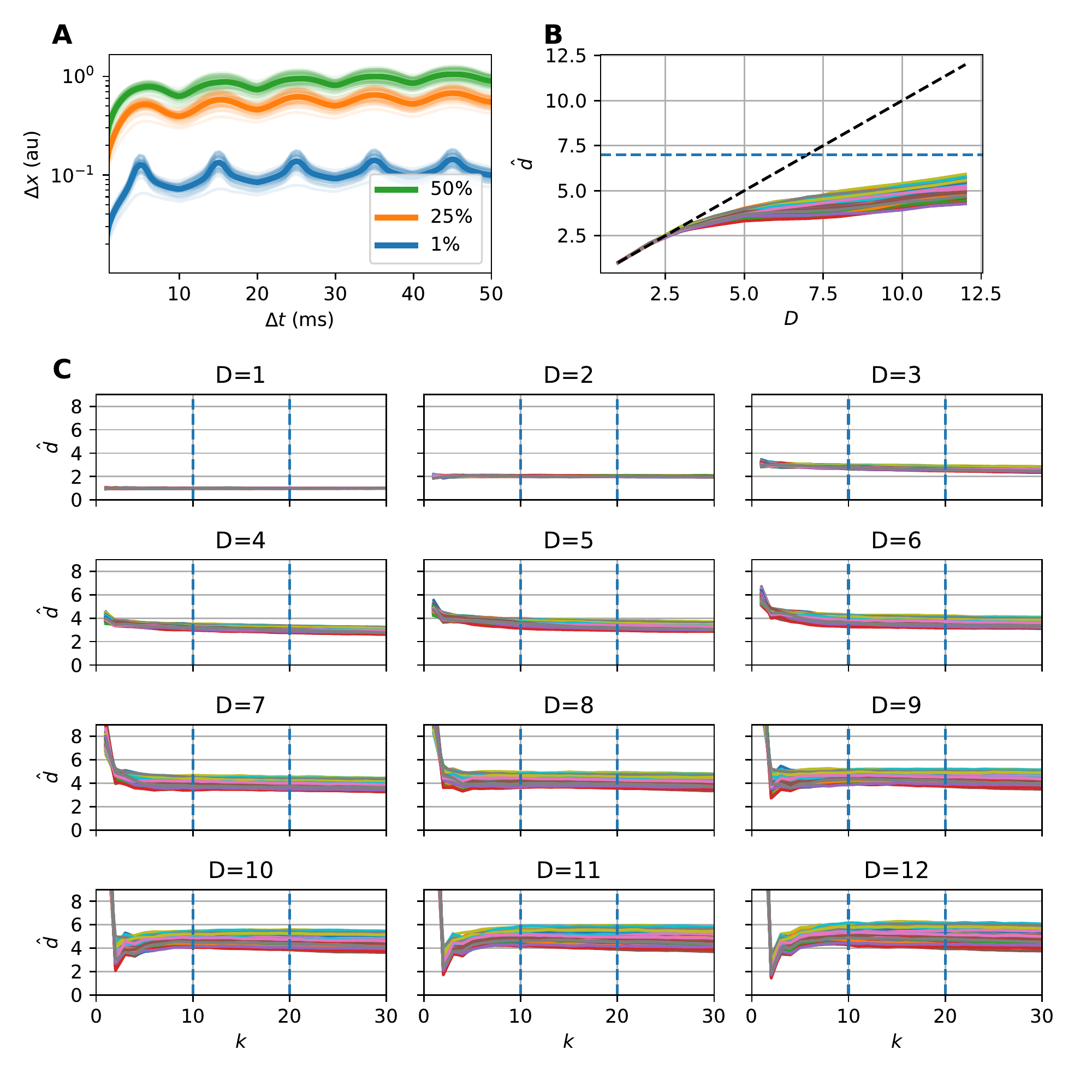}
    \caption{\textbf{Subsampling and embedding of the CSD signals.} 
    \textbf{A} Mean Space-time separation plot of the CSD recordings, the lines show the contours of the $1\%$ (blue),  $25\%$ (orange),  and $50\%$ (green) percentiles for the 34 - 16 interictal and 18 seizures - recordings (thin lines) and their average (thick line, $D=2$).
    The first local maximum is at around $5$ ms ($10$ time steps), which appoints the proper subsampling to avoid the effect of temporal correlations during the dimension estimation.
    \textbf{B} Intrinsic dimension in the function of the embedding dimension for the $88$ recording-channels (averaged between $k=5-10$, for the first seizure).
    Dimension-estimates deviate from the diagonal above $D=3$, thus we chose $D=2*3+1=7$ as embedding dimension.
    \textbf{C} Intrinsic dimension in the function of neighborhood size for various embedding dimensions (88 channels, for the first seizure). The dimension estimates are settled at the neighborhood size between k=$10-20$ (dashed blue). The knee because of the autocorrelation becomes pronounced for $D>=8$.
    }
    \label{fig:memo_embed}
\end{figure}

\begin{table}[h!]
    \flushright
    \caption{
    \textbf{Used symbols with interpretation.}}
       \begin{tabular}{rcl}
    $k$ &-& the order of the neighbor (increasing order as the distance from the center rises)\\
    $K-1$ &-& number of points in the neighborhood\\
    $R$ &-& distance from center \\
    $r$ &-& normalized distance from center $r = R / R_K$ ($r\in [0, 1]$)\\
    $\eta$ &-& local density-dependent factor, approximately independent of $R$\\
    $D$ &-& intrinsic dimensionality of the space where the points are.\\
    $d$ &-& estimated intrinsic dimension value\\
    $P$ &-& Probability, probability mass function\\
    $p$ or $q$ &-& probability density function (pdf)\\
    $n$ &-& sample size\\
    $N$ &-& number of realizations\\ 
    $B$ &-& Euler beta function\\
\end{tabular}
    \label{tab:symbols}
\end{table}

\end{document}